\newtheorem{proposition}{Proposition}
\journal{Physica A: Statistical Mechanics and its Applications}
\begin{document}

\begin{frontmatter}



\title{A Continuous Nonlinear Optimization Perspective on the Spin Glass Problem}

\author{Phil Duxbury}
\affiliation{
    organization={Michigan State University}, 
    addressline={Room 4260 Biomedical and Physical Science Building}, 
    city={East Lansing},
    postcode={48824-1116}, 
    state={MI},
    country={USA}
}

\author{Carlile Lavor}
\affiliation{
    organization={IMECC – UNICAMP, University of Campinas}, 
    city={Campinas},
    postcode={13083-790}, 
    state={SP},
    country={Brazil}
}

\author{Luiz Leduino de Salles-Neto}
\affiliation{
    organization={Institute of Science and Technology, Federal University of São Paulo}, 
    addressline={Avenida Cesare Mansueto Giulio Lattes, 1201}, 
    city={São José dos Campos},
    postcode={12247-014}, 
    state={SP},
    country={Brazil}
}


\begin{abstract}
We present a continuous nonlinear optimization model for the Spin Glass Problem (SGP), building on a classical result by Rosenberg (1972), which shows that for a class of multilinear polynomial problems the optimal values of the continuous relaxation and the corresponding discrete model coincide. Using the SGP as a case study, we provide a simple, problem-specific argument showing how any optimal solution returned by a continuous solver can be converted into an optimal discrete spin configuration, even when the solver outputs non-integer values. The relaxed model remains nonconvex and does not alter the inherent computational hardness of the problem, but it offers a direct and conceptually transparent continuous formulation that can be handled by modern global optimization software. Computational experiments on standard benchmark instances indicate that this approach can match, and in several cases surpass, recent integer programming linearization techniques, making it a practical and complementary tool for researchers working at the interface between statistical physics and combinatorial optimization.
\end{abstract}

\begin{keyword}
Spin Glass Problem \sep Max Cut Problem \sep Continuous Optimization \sep Integer Programming
\end{keyword}

\end{frontmatter}



\section{Introduction}

The \textit{Spin Glass Problem} (SGP) is a central problem in statistical physics and materials science, with additional applications in chemistry and artificial
neural networks \cite{bo07,fi91}. In its simplest form, the model describes a collection of interacting spins arranged on a lattice, where each spin can take one of
two orientations.

A spin can be seen as a vector of fixed magnitude and variable direction. In a crystal with cubic symmetry, where the atomic arrangement induces preferred axes, it is common to restrict attention to a dominant direction, allowing each spin to assume one of two possible states, $-1$ or $+1$. A configuration of the system assigns a spin value to each lattice site and determines how neighboring spins interact.

For any pair of adjacent sites $i,j$ in the lattice, the coefficient $J_{i,j}$ represents the strength and nature of the magnetic interaction. The energy of a configuration is given by
\[
F(x_1,\ldots,x_n)=-\sum_{(i,j)\in E} J_{i,j}\, x_i x_j,
\]
where $E$ denotes the set of interacting pairs. This formulation is naturally interpreted on a graph whose vertices correspond to lattice sites and whose edges represent spin–spin couplings. A configuration partitions the vertices into those with $x_i=+1$ and those with $x_i=-1$, and in the absence of an external field the SGP is equivalent to a Max-Cut problem \cite{re10}.

Minimizing $F$ is also equivalent to solving the following nonlinear binary optimization problem, which we refer to as the \textit{SGP Model}:
\begin{equation}
\begin{array}{c}
\begin{array}{rc}
\underset{x_i}{\min}\;\; & 
F_1(x_1,\ldots,x_n)= -\displaystyle\sum_{i=1}^{n-1}\sum_{j>i}^{n} J_{i,j} x_i x_j \\[6pt]
& x_i \in \{-1,1\},\quad i=1,\ldots,n.
\end{array}
\end{array}
\label{model_1}
\end{equation}
The SGP Model is a binary quadratic problem and inherits the well-known computational hardness of Max-Cut.

\bigskip

\section{Large-Scale Ising Machines}
\label{sec:ising-machines}

Over the last decade, a variety of large-scale “Ising machines’’ have been proposed as heuristic solvers for NP-hard combinatorial optimization problems \cite{mohseni}. The common feature of these approaches is that the Ising or spin-glass Hamiltonian is encoded into a physical or dynamical system whose evolution tends to reach low-energy configurations, instead of being solved directly as a mathematical programming model.

Coherent Ising machines (CIMs) \cite{inagaki2016coherent} implement analog optical networks that approximately minimize the Ising energy and can handle fully connected instances of moderate size. Simulated Bifurcation (SB) algorithms \cite{goto2021highperformance} emulate certain Hamiltonian dynamics in discrete or continuous time and are efficiently implemented on FPGA or GPU hardware, addressing very large spin systems. Mean-field annealing schemes \cite{das2025classical} and vector-spin methods such as the Vector Ising Spin Annealer (VISA) \cite{cummins2025vectorising} also follow this paradigm, replacing discrete spins by continuous variables whose dynamics explore the energy landscape before projection back to $\{-1,1\}$ configurations.

These methods typically achieve high-quality solutions within short computational or physical runtimes and can scale to large instance sizes. However, they are designed as physics-inspired heuristics: they do not provide general guarantees of global optimality and their performance often depends on specialized or high-performance computing infrastructures.

In the remainder of this paper, we pursue a complementary direction. We formulate the SGP as a continuous nonlinear optimization model that is closely related to its binary quadratic formulation \eqref{model_1}, and we solve both using a modern global optimization solver running on general-purpose CPU hardware. This setting retains the benefits of a rigorous mathematical programming framework and, when the solver is run until convergence, allows solutions to be certified as globally optimal, while remaining independent of any specific analog or physics-inspired hardware architecture.

\section{The Spin Glass Problem as a Binary Quadratic Problem}

The SGP can be cast as a Binary Quadratic Problem (BQP) \cite{ca08}. Up to a standard change of variables from $\{-1,1\}$ to $\{0,1\}$, a generic BQP can be written as
\begin{equation}
\begin{array}{c}
\begin{array}{rc}
\underset{x_i}{\min}\;\; & \displaystyle \sum_{i=1}^{n} \sum_{j=1}^{n} Q_{i,j}\, x_i x_j
\;+\; \sum_{i=1}^{n} L_i x_i \\[6pt]
& x_i \in \{0,1\},\quad i=1,\ldots,n,
\end{array}
\end{array}
\label{modelfurini1}
\end{equation}
where $Q \in \mathbb{R}^{n\times n}$ is a symmetric matrix and $L \in \mathbb{R}^n$.

A widely used strategy for solving a BQP is to reformulate it as a mixed-integer linear program (MILP) by introducing additional binary variables \cite{glo73}. Other linearization techniques (LTs) have been proposed that rely only on additional continuous variables \cite{glo74,glo75,she07}. 

In \cite{fu19}, the authors present a detailed study of the strength of the linear programming (LP) relaxations associated with four such LTs, and introduce a new formulation, called the Extended Linear Formulation. Besides several theoretical results, they report a computational comparison of these LTs on problems with and without linear constraints.

We briefly recall here the two strongest LTs without linear constraints discussed in \cite{fu19}: the formulation proposed by Glover \cite{glo74}, denoted GW model, and the one by Furini and Traversi \cite{fu19}, denoted FT model (the other two linearizations analyzed in \cite{fu19} are due to Glover \cite{glo75} and Sherali and Smith \cite{she07}).

The \textit{GW Model} is based on a standard linearization of the quadratic terms in a BQP. It is given by
\begin{equation}
\begin{array}{c}
\begin{array}{rc}
\underset{x_i,\;y_{ij}}{\min}\;\; & 
\displaystyle \sum_{i=1}^{n-1}\sum_{j>i}^{n} 2Q_{i,j}\, y_{ij}
\;+\; \sum_{i=1}^{n} L_i x_i \\[6pt]
& 
\begin{array}{c}

y_{ij} \leq x_{i} \text{ } \;\;\;\;\;\;\;\;\;\;\;\;\;\; (i,j=1,...,n, \; i<j)\\
y_{ij} \leq x_{j} \text{ } \;\;\;\;\;\;\;\;\;\;\;\;\;\; (i,j=1,...,n, \; i<j)\\
y_{ij} \geq x_{i} + x_{j} -1 \; (i,j=1,...,n, \; i<j)\\


\text{ } y_{ij} \in \{0,1\} \;\;\;\;\;\;\;\;\;\;  (i,j=1,...,n, \; i<j)\\

\text{ }x_{i} \in \{0, 1\} \text{ } \;\;\;\;\;\;\;\;\;\;\;\;\;\;\;\;\;\;\;\;\; (i=1,...,n). \\ 

\end{array}%
\end{array}%
\end{array}%
\label{modelfurini2}
\end{equation}

This formulation increases the size of the problem by adding $n(n-1)/2$ variables and $4n(n-1)/2$ constraints.

The \textit{FT Model} is given by
\begin{equation}
\begin{array}{c}
\begin{array}{rc}
\underset{x_i,\;z_{ij}^i,\;z_{ij}^j}{\min}\;\; &
\displaystyle \sum_{i=1}^{n}\sum_{j=i}^{n} Q_{i,j}
\;+\; \sum_{i=1}^{n} L_i x_i
\;-\; \sum_{i=1}^{n-1}\sum_{j>i}^{n} 2Q_{i,j}\,(z_{ij}^i + z_{ij}^j) \\[6pt]
& 
\begin{array}{c}

x_{i}+z_{ij}^i \leq 1 \text{ } \;\;\;\;\;\;\;\;\;\;\;\;\;\; (i,j=1,...,n, \; i<j)\\
x_{j}+z_{ij}^j \leq 1 \text{ } \;\;\;\;\;\;\;\;\;\;\;\;\;\; (i,j=1,...,n, \; i<j)\\
z_{ij}^i+z_{ij}^j \leq 1 \text{ } \;\;\;\;\;\;\;\;\;\;\;\;\;\; (i,j=1,...,n, \; i<j)\\
x_i+z_{ij}^i+z_{ij}^j \geq 1 \text{ } \;\;\;\;\;\; (i,j=1,...,n, \; i<j)\\
x_{j}+z_{ij}^i+z_{ij}^j \geq 1 \text{ } \;\;\;\;\;\; (i,j=1,...,n, \; i<j)\\


\text{ } z^i_{ij} \in \{0,1\} \text{ } \;\;\;\;\;\;\;\;\;\;\;\;\;\;\; (i,j=1,...,n, \; i<j)\\

\text{ } z^j_{ij} \in \{0,1\} \text{ } \;\;\;\;\;\;\;\;\;\;\;\;\;\;\;(i,j=1,...,n, \; i<j)\\

\text{ }x_{i} \in \{0, 1\}\text{ } \;\;\;\;\;\;\;\;\;\;\;\;\;\;\;\;\;\;\;\;\;\;\;\;\;\;\; (i=1,...,n). \\

\end{array}%
\end{array}%
\end{array}%
\label{modelfurini2a}
\end{equation}
This formulation increases the problem size by $n(n-1)$ additional variables and $5n(n-1)$ constraints.

In the next section, we propose a different strategy for tackling the SGP. Instead of first formulating it as an unconstrained binary quadratic problem and then relaxing the associated linearized model, we directly relax the original SGP Model \eqref{model_1}. We then show that any optimal solution of this continuous relaxation can be mapped to an optimal solution of the discrete problem. Other continuous formulations for spin systems, unrelated to BQP linearizations, can be found in \cite{bi00,ha20,si73}.

\section{The Spin Glass Problem as a Continuous Optimization Problem}

Motivated by the results obtained in \cite{du21} (see also \cite{ouzia}), where a continuous relaxation of a discrete model for the \textit{Golomb Ruler Problem} (GRP) \cite{dra09} was successfully used to solve challenging instances, we now adopt a similar viewpoint for the SGP.

We consider the following formulation, called the \textit{Continuous Spin Glass} (CSG) model, or simply the \textit{CSG Model}:
\begin{equation}
\begin{array}{c}
\begin{array}{rc}
\underset{x_i}{\min} & 
F_2(x_1,\ldots,x_n)= -\displaystyle\sum_{i=1}^{n-1}\sum_{j>i}^{n} J_{i,j} x_i x_j \\[6pt]
& x_{i} \in [-1, 1],\;\; i=1,\ldots,n.
\end{array}
\end{array}
\label{model_2}
\end{equation}

In \cite{rose}, Rosenberg considered a more general problem: minimizing a real polynomial $p(x_1,\ldots,x_n)$ that is linear in each variable over the box $[-1,1]^n$. He proved that the minimum value of this continuous problem coincides with the minimum of $p$ over the discrete set $\{-1,1\}^n$, and in particular that there always exists an optimal solution with all components in $\{-1,1\}$. Our model \eqref{model_2} is a special case of this framework, since $F_2$ is multilinear and the feasible region is exactly $[-1,1]^n$.

Rosenberg’s result is essentially existential: it guarantees equality of optimal values and the existence of discrete optimal solutions, but it does not address how the output of a generic continuous solver should be processed in practice. In particular, continuous global optimization solvers may return optimal (or near-optimal) points with some components strictly inside $(-1,1)$, and it is not immediately obvious how to turn such points into discrete spin configurations without losing optimality.

In this paper, we regard the SGP as a concrete instance of Rosenberg’s setting and use his result to justify the continuous formulation \eqref{model_2}. For completeness, and to keep the exposition self-contained for readers outside mathematical optimization, we also state a simple and problem-specific argument that explains the structure of optimal solutions of \eqref{model_2} and underlies our procedure for converting continuous solutions into discrete ones. The key observation is summarized in the following proposition, which relies only on the linearity of $F_2$ in each variable and should be viewed as a classical property rather than a new theoretical result.

\begin{proposition}[Optimality transfer to box endpoints]
Let $F_2(x_1,\ldots,x_n)=-\sum_{i=1}^{n-1}\sum_{j>i}^{n} J_{i,j} x_i x_j$ be defined on the box $[-1,1]^n$, and let $x^\star$ be a global minimizer. If $x_k^\star\in(-1,1)$ for some $k$, then
\[
F_2\bigl(x^\star|_{x_k=-1}\bigr)
=
F_2\bigl(x^\star|_{x_k=1}\bigr)
=
F_2(x^\star).
\]
\end{proposition}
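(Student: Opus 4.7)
The plan is to exploit the fact that $F_2$ is multilinear, hence affine in each coordinate when the others are frozen. I will isolate the $k$-th variable, reduce the statement to a trivial one-dimensional fact about minimizing an affine function on $[-1,1]$, and conclude that the slope in direction $x_k$ must vanish at $x^\star$.

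First I would fix all coordinates of $x^\star$ except the $k$-th and define the univariate slice
\[
g(t) := F_2\bigl(x_1^\star,\ldots,x_{k-1}^\star,\,t,\,x_{k+1}^\star,\ldots,x_n^\star\bigr),
\qquad t\in[-1,1].
\]
Collecting in $F_2$ all pairs $\{i,j\}$ that contain the index $k$, separately from those that do not, gives
\[
g(t) \;=\; a_k\, t + b_k,
\qquad
a_k = -\sum_{j\neq k} J_{k,j}\, x_j^\star,
\]
where $b_k$ gathers the remaining terms and does not depend on $t$; here I use the symmetric convention $J_{k,j}=J_{j,k}$ implicit in the double sum. Thus $g$ is affine on $[-1,1]$.

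Next, since $x^\star$ is a global minimizer of $F_2$ on $[-1,1]^n$, the value $t=x_k^\star$ must be a global minimizer of $g$ on $[-1,1]$. An affine function on a closed interval attains its minimum at an endpoint; if the slope $a_k$ were nonzero, the unique minimizer would be either $-1$ or $+1$, contradicting the hypothesis $x_k^\star\in(-1,1)$. Hence $a_k=0$, and $g$ is constant on $[-1,1]$. In particular $g(-1)=g(1)=g(x_k^\star)$, which is exactly the claimed equality
\[
F_2\bigl(x^\star|_{x_k=-1}\bigr)=F_2\bigl(x^\star|_{x_k=1}\bigr)=F_2(x^\star).
\]

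There is no real obstacle here: the only subtlety is the bookkeeping to show that $g$ is genuinely affine in $t$, which is immediate because every monomial of $F_2$ contains each variable to degree at most one. The statement is essentially the one-variable observation that an interior minimizer of an affine function on an interval forces the slope to be zero, which is why the result should be regarded (as the authors note) as a classical structural property rather than a new theorem. This also makes it clear how to convert any fractional coordinate of a continuous optimum into $\pm 1$ without changing the objective, which is the practical content used later in the paper.
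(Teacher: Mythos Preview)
Your proof is correct and follows essentially the same argument as the paper: define the univariate slice with all other coordinates frozen, observe it is affine because $F_2$ is multilinear, and use the fact that an interior minimizer of an affine function on $[-1,1]$ forces the slope to vanish, so the slice is constant. The only difference is cosmetic---you explicitly compute the slope $a_k$, whereas the paper simply appeals to $\phi_k'(x_k^\star)=0$---but the idea and structure are identical.
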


\begin{proof}
Fix all coordinates except $x_k$ and consider the univariate function
\[
\phi_k(t)
=
F_2\bigl(x_1^\star,\dots,x_{k-1}^\star,t,x_{k+1}^\star,\dots,x_n^\star\bigr),
\quad t \in [-1,1].
\]
Since $F_2$ is linear in each variable, $\phi_k$ is an affine function of~$t$. As $x^\star$ is a global minimizer of $F_2$ on $[-1,1]^n$ and $x_k^\star \in (-1,1)$, the point $t = x_k^\star$ must be a global minimizer of $\phi_k$ on $[-1,1]$. An affine function on an interval can attain a (local or global) minimum at an interior point only if its slope is zero, \textit{i.e.}, $\phi^{\prime}_k(x_k^\star)=0$, which implies that $\phi_k$ is constant on $[-1,1]$. Hence
\[
\phi_k(-1)  = \phi_k(1) = \phi_k(x_k^\star),
\]
which is equivalent to
\[
F_2\bigl(x^\star|_{x_k=-1}\bigr)
=
F_2\bigl(x^\star|_{x_k=1}\bigr)
=
F_2(x^\star).
\]
Since $x^\star$ is a global minimizer, the two endpoints have the same minimal value and are therefore global minimizers as well.
\end{proof}

By repeatedly applying Proposition~1 to any coordinate $x_k^\star \in (-1,1)$, we can move each such component to either $-1$ or $+1$ without changing the objective value. This shows that there always exists a global minimizer of \eqref{model_2} in $\{-1,1\}^n$, and that an optimal solution with non-integer components can be systematically transformed into an optimal discrete spin configuration.

In the next section, we compare models \eqref{modelfurini2} and \eqref{modelfurini2a}, studied in \cite{fu19}, with the CSG Model \eqref{model_2}, using the GUROBI global optimization solver \cite{gurobi}.

\section{Computational Experiments}

For the computational experiments, we considered the same instance classes used in \cite{fu19} for the SGP Model \eqref{model_1}. These classes, denoted \textit{rudy} and \textit{ising}, belong to the Biq Mac library \cite{wi07}, a widely used collection of benchmark instances for Max-Cut and related binary quadratic problems (see also \cite{kri14, re10}).

For the \textit{rudy} instances (Table~\ref{rudy1}), we considered problems with $n = 60$, $n = 80$, and $n = 100$ variables. For the \textit{ising} instances (Table~\ref{ising1}), we used instances with $n = 100$, $n = 150$, $n = 250$, and $n = 300$. Further details on these classes can be found in \cite{wi07} and all instances are publicly available at\\

\url{https://biqmac.aau.at/biqmaclib.html}\\

All experiments for the models \eqref{modelfurini2} and \eqref{modelfurini2a}, as well as for the CSG Model \eqref{model_2}, were carried out on the NEOS server \cite{gro97}, using a Dell EMC PowerEdge R440 with the following configuration: two Intel Xeon Gold 6140 CPUs @ 2.3GHz (36 cores in total), 512 GB RAM, and 2$\times$1.5 TB SAS SSDs in RAID1, connected via 1 Gb/s Ethernet. This setup corresponds to a general-purpose CPU server and does not involve any specialized Ising hardware or physics-inspired accelerators. We used GUROBI 12.0 \cite{gurobi} with default parameter settings for all models. For each instance, the time limit was set to 1800 seconds. The codes and parameter files employed in our experiments are available at\\

\url{https://github.com/luizleduino/spinglassmodel}\\

In Tables~\ref{rudy1}, \ref{ising1}, and~\ref{gset}, entries in bold indicate the best objective value among the models tested with GUROBI for each instance. Entries marked with an asterisk (*) coincide with the known optimal value (column \textit{Optimum}). Tables~\ref{rudy1} and~\ref{ising1} report, for each problem, the optimal value, the best solution obtained from the linearization-based approach (column \textit{Best LT}) and its runtime (\textit{Time LT}), and the solution returned by the CSG Model \eqref{model_2} (column \textit{CSG Model}) together with its runtime (\textit{Time CSG}).

Tables~\ref{rudy1} and~\ref{ising1} summarize the behavior of the competing formulations on the \textit{rudy} and \textit{ising} classes, respectively. For all instances reported in these tables, the CSG Model attains the known optimal value, as indicated by the entries that match the \textit{Optimum} column. In contrast, the linearization-based models often fail to reach the optimum within the time limit, especially for the \textit{rudy} instances, where their best values do not reach the optimum in most cases.

A second salient feature is the difference in computational time. On the \textit{ising} instances, the CSG Model typically converges in a fraction of a second, whereas the linearization-based models frequently requires much longer runtimes (hundreds or thousands of seconds) and, in several cases, still does not match the optimal value. 

Taken together, these results indicate that the continuous formulation \eqref{model_2}, when solved by a modern global optimizer such as GUROBI, provides a competitive and often superior alternative to classical linearization techniques for SGP-related binary quadratic problems, within a standard mathematical-programming workflow based on general-purpose CPU resources.

The results reported in Table~\ref{gset} compare models \eqref{modelfurini2} and \eqref{modelfurini2a} with the CSG Model \eqref{model_2} on a set of ten instances from the Gset library. These instances have a large number of vertices and are widely used as challenging benchmarks for Max-Cut and related Ising problems (see, for example, \cite{ben,festa}). They can be downloaded from\\

\url{http://web.stanford.edu/~yyye/yyye/Gset/} \\

In this set of ten instances, the CSG Model attains the known optimal value in five cases. For the remaining instances, the worst relative gap between the value returned by the CSG Model and the optimum is $2.35\%$ (instance G22). In five of the ten instances, GUROBI did not find any feasible solution for the linearization-based models within the time limit, while the CSG Model always produced a feasible solution of good quality. It is also worth noting that, for instances G48 ($n = 3000$), G49 ($n = 3000$), and G57 ($n = 5000$), GUROBI finds the optimal solution using the CSG Model in less than one second.

It is important to emphasize that none of the mathematical models considered here uses the optimal value as a target or input. The solvers operate without any prior knowledge of the optimum, which is only employed a posteriori to evaluate solution quality. This is particularly relevant in practical scenarios, where the true optimal value is typically unknown.

\begin{table}[t]
\caption{Rudy instances: the best solutions obtained by each approach.}
\begin{center}
\begin{tabular}{|c|c|c|c|c|c|c|}
\hline\hline
Instance  & n & Optimum & Best LT & Time LT &  CSG Model & Time CSG\\
\hline
$g05\_60.0$ & 60 & -536 & -535 & 1800 & \textbf{-536*} & 122.55\\
\hline
$g05\_60.1$ & 60 & -532 & -529 & 1800 &  \textbf{-532*} & 130.12 \\
\hline
$g05\_60.2$ & 60 & -529 & \textbf{-529*} & 1800 &  \textbf{-529*} & 116.81 \\
\hline
$g05\_80.0$ & 80 & -929 & -924 & 1800 &  \textbf{-929*} & 1800 \\
\hline
$g05\_80.1$ & 80 & -941 & -938 & 1800 &  \textbf{-941*}& 425.15\\
\hline
$g05\_80.2$ & 80 & -934 & -932 & 1800&  \textbf{-934*} & 977.78 \\
\hline
$g05\_100.0$ & 100 & -1430 & -1415 & 1800 & \textbf{-1430*}  & 1800 \\
\hline
$g05\_100.1$ & 100 & -1425 & -1415 & 1800& \textbf{-1425*} & 1800\\
\hline
$g05\_100.2$ & 100 & -1432 & -1413 & 1800 & \textbf{-1432*} & 1800\\
\hline
\end{tabular}%
\end{center}
\label{rudy1}
\end{table}

\begin{table}[H]
\caption{Ising instances: the best solutions obtained by each approach.}
\begin{center}
\begin{tabular}{|c|c|c|c|c|c|c|}
\hline\hline
Instance  & n & Optimum & Best LT & Time LT &  CSG Model & Time CSG\\
\hline
$2.5-100\_5555$ & 100 & -2460049 & \textbf{-2460049*} & 50.717 & \textbf{-2460049*} & 0.026\\
\hline
$2.5-100\_6666$ & 100 & -2031217 & \textbf{-2031217*} & 47.271 & \textbf{-2031217*} & 0.027\\
\hline
$2.5-100\_7777$ & 100& -3363230 & \textbf{-3363230*} & 115.163 &\textbf{-3363230*} & 0.042\\
\hline
$2.5-150\_5555$ & 150 & -4363532 &\textbf{-4363532*}& 394.32 &\textbf{-4363532*} & 0.051\\
\hline
$2.5-150\_6666$ & 150 & -4057153 &\textbf{-4057153*} &609.43 & \textbf{-4057153*} &0.045\\
\hline
$2.5-150\_7777$ & 150 & -4243269 &-4242853 & 1800 & \textbf{-4243269*} & 0.054 \\
\hline
$2.5-250\_5555$ & 250 & -7919449 & -7879016 &1800  & \textbf{-7919449*} &0.056 \\
\hline
$2.5-250\_6666$ & 250 & -69255717 & -6919872 &1800  & \textbf{-6925717*} & 0.051  \\
\hline
$2.5-250\_7777$ & 250 & -6596797 & -6573556 & 1800 & \textbf{-6596797*} & 0.052  \\
\hline
$3.0-250\_5555$ & 250 & -7823791 & \textbf{-7823791*} &1800  &  \textbf{-7823791*} & 0.064 \\
\hline
$3.0-250\_6666$ & 250 & -6903351 & -6900760 & 1800 &  \textbf{-6903351*} & 0.052 \\
\hline
$3.0-250\_7777$ & 250 & -6418276 & -6366224 & 1800 & \textbf{-6418276*} & 0.060\\
\hline
$2.5-300\_5555$ & 300 & -8579363 & -8487227 & 1800 & \textbf{-8579363*} & 0.051\\
\hline
$2.5-300\_6666$ & 300 & -9102033 & -8914176 & 1800 & \textbf{-9102033*} &  0.051\\
\hline
$2.5-300\_7777$ & 300 & -8323804 & -8197766 & 1800 & \textbf{-8323804*} &  0.052 \\
\hline
$3.0-300\_5555$ & 300 & -8493173 & -8487227 & 1800 & \textbf{-8493173*} & 0.057   \\
\hline
$3.0-300\_6666$ & 300 & -8915110 & -8914176 & 1800 & \textbf{-8915110*} & 0.070  \\
\hline
$3.0-300\_7777$ & 300 & -8242904 & -8236400 & 1800 & \textbf{-8242904*} &  0.071  \\
\hline
\hline
\end{tabular}%
\end{center}
\label{ising1}
\end{table}

\begin{table}[H]
\caption{Gset instances: the best solutions obtained by each approach.}
\begin{center}
\begin{tabular}{|c|c|c|c|c|c|c|c|}
\hline\hline
Instance  & n & Optimum & Best LT & Time LT &  CSG Model & Time CSG & GAP CSG\\
\hline
G1 & 800 & -11624 & -11433 & 1800 & \textbf{-11624*} & 1800 & 0 \\
\hline
G2 & 800 & -11620 & -11244 & 1800 & \textbf{-11620*} & 1800 & 0 \\
\hline 
G51 & 1000 & -3848 & -2954 & 1800 &  \textbf{-3813} & 1800 & 0.81\% \\
\hline  
G52 & 1000 & -3851 & -2958 & 1800 &  \textbf{-3826} & 1800 & 0.67\%\\
\hline
G22 & 2000 & -13359 & -9998 & 1800 & \textbf{-13046}  & 1800 & 2.35\% \\
\hline
G23 & 2000 & -13342 & - & 1800 & \textbf{-13089} & 1800 & 0.018\% \\
\hline 
G48 & 3000 & -6000 & - &  1800 & \textbf{-6000*} & 0.26 & 0 \\
\hline
G49 & 3000 & -6000 & - & 1800 & \textbf{-6000*} & 0.26 & 0 \\
\hline 
G57 & 5000 & -10000 & - & 1800 & \textbf{-10000*} & 0.51 & 0 \\
\hline
G58 & 5000 & -19293 & - & 1800 & \textbf{-18943} & 1800 & 1.81\%\\
\hline 
\end{tabular}%
\end{center}
\label{gset}
\end{table}

\section{Conclusion}

The Spin Glass Problem (SGP) is a well-known and computationally demanding problem in statistical physics, commonly tackled through combinatorial optimization. In this work, we revisited the SGP from the viewpoint of continuous nonlinear optimization, building on Rosenberg's classical result \cite{rose}, which shows that, for a broad class of multilinear polynomial problems with box constraints, the optimal values of the discrete and continuous formulations coincide.

Within this framework, we modeled the SGP as a particular instance of that class and introduced the Continuous Spin Glass (CSG) Model \eqref{model_2}, a nonconvex quadratic program with simple bound constraints. Rather than proposing new theoretical results, we provided a short argument specialized to the SGP (Proposition~1) that explains how any global minimizer of the continuous model can be transformed, in a constructive way, into an optimal $\{-1,1\}$ configuration for the SGP Model \eqref{model_1}. This argument offers a transparent and self-contained justification for the use of continuous global optimization solvers to obtain ground states, and clarifies how to deal with optimal solutions that contain non-integer components.

The computational experiments carried out on standard benchmark families support the practical relevance of this perspective. On the Biq Mac \textit{rudy} and \textit{ising} instances, the CSG Model solved by GUROBI systematically reached the known optimal value, often with substantially shorter runtimes than those required by linearization-based formulations \eqref{modelfurini2}–\eqref{modelfurini2a}. 

On the more challenging Gset instances, the CSG Model either matched the optimum or produced solutions with small relative gaps, and did so robustly for problems with up to $5000$ spins on the general-purpose CPU server described in the computational experiments. The formulation is simple to implement and can be handled by commercially available software, making it accessible even to practitioners without deep expertise in discrete optimization.

It is also useful to contrast this approach with recent ``Ising machine'' methods, which embed the Ising Hamiltonian into physical or dynamical systems and are explicitly heuristic in nature. Conversely, the models considered here remain within the mathematical programming paradigm: both the continuous and discrete formulations are solved by a global optimization solver that, when run to convergence, provides certificates of global optimality. All computations reported in this paper are carried out with such solver on general-purpose CPU hardware, rather than on dedicated Ising machines or other specialized physical devices.

Overall, our results reinforce the practical value of continuous models in addressing hard combinatorial problems such as the SGP. The CSG Model offers a simple formulation, a clear theoretical link to the discrete problem, and strong empirical performance when combined with a modern global solver, such as GUROBI. We expect that this continuous optimization viewpoint can be extended to other Ising and unconstrained binary quadratic problems, and that it may serve as a useful building block in hybrid strategies that combine continuous formulations, linearization techniques, and physics-inspired heuristics for even larger and more complex instances.

\bigskip

\section*{Acknowledgments}

The authors thank Prof. Fabio Furini and Prof. Emiliano Traversi for kindly sharing part of the computational results reported in \cite{fu19}. We are also grateful to the anonymous reviewers for their careful reading and scientifically grounded criticisms, which led to substantial improvements in the presentation, scope, and positioning of this work. Finally, we acknowledge the financial support of the Brazilian research agencies FAPESP (grant numbers 2013/07375-0, 2023/08706-1, 2024/00923-6) and CNPq (grant numbers 305227/2022-0, 404616/2024-0, 304533/2025-4, 402609/2025-5) for their financial support.

\bigskip
\bigskip

\end{document}